\newtheorem{theorem}{Theorem}
\journal{Optics and Laser Technology}
\begin{document}

\begin{frontmatter}

\title{Verifiable cloud-based variational quantum algorithms}

\author[inst1]{Junhong Yang}
\author[inst1]{Banghai Wang}
\ead{bhwang@gdut.edu.cn}
\author[inst2]{Junyu Quan}
\author[inst3]{Qin Li}
\ead{liqin@xtu.edu.cn}

\address[inst1]{School of Computer Science and Technology, Guangdong University of Technology, Guangzhou 510006, China}

\address[inst2]{School of Mathematics and Computational Science, Xiangtan University, Xiangtan 411105, China}

\address[inst3]{School of Computer Science, Xiangtan
University, Xiangtan 411105, China}

\begin{abstract}
Variational quantum algorithms (VQAs) have shown potential for quantum advantage with noisy intermediate-scale quantum (NISQ) devices for quantum machine learning (QML). However, given the high cost and limited availability of quantum resources, delegating VQAs via cloud networks is a more practical solution for a client with limited quantum capabilities. Recently, Shingu et al. proposed a variational secure cloud quantum computing protocol that leverages ancilla-driven quantum computation (ADQC) to perform cloud-based VQAs with minimal quantum resource consumption. However, their protocol lacks verifiability, which exposes it to potential malicious behaviors by the server. Additionally, channel loss requires frequent re-delegation as the size of the delegated variational circuit grows, complicating verification due to increased circuit complexity. This paper introduces a novel protocol that addresses these challenges by incorporating verifiability and increasing tolerance to channel loss while maintaining low quantum resource consumption for the server and requiring minimal quantum capabilities from the client.

\end{abstract}


\begin{keyword}
Cloud-based variational quantum algorithms \sep Ancilla-driven quantum computation \sep Verifiability
\end{keyword}

\end{frontmatter}

\section{Introduction}
Quantum computation~\cite{divincenzo1995quantum} has rapidly transitioned from theoretical speculation to practical application, leveraging the principles of quantum mechanics to tackle problems intractable for classical computers. Despite this progress, quantum resources remain scarce and costly, primarily accessible to large corporations. This limitation has spurred efforts to make quantum computation more accessible, especially for a client with limited quantum capabilities.

Blind quantum computation (BQC), a subset of delegated quantum computation (DQC), was first introduced by Childs~\cite{childs2005secure}. It employs quantum one-time padding~\cite{deng2004secure} in gate-based quantum computation (GBQC), enabling a client to delegate quantum computations while ensuring blindness, i.e., the client's input, output, and algorithm remain hidden from the server. Building on this concept, Broadbent et al.~\cite{broadbent2009universal} proposed the universal blind quantum computation (UBQC) protocol, also known as the BFK protocol. This protocol uses brickwork states as resource states within measurement-based quantum computation (MBQC)~\cite{briegel2009measurement} on the server side, requiring the client only to prepare a set of qubits \(\left\{ \frac{1}{\sqrt{2} }(\ket{0}+ exp(i\frac{k\pi}{4})\ket{1})\mid k \in \{0, 1, \ldots, 7\} \right\}\) while maintaining blindness. This protocol has spurred research in areas such as verification~\cite{morimae2014verification, fitzsimons2017unconditionally, barz2013experimental}, the reduction of the client's quantum capabilities ~\cite{morimae2013blind, li2021blind, morimae2013secure, li2014triple, xu2021universal, zhang2022measurement, cao2023multi}, joint computational tasks~\cite{sciarrino2023multi}, and various applications, including Shor's algorithm~\cite{das2022blind} and Grover's algorithm~\cite{gustiani2021blind}. Experimental demonstrations of BQC have also been conducted~\cite{barz2012demonstration, huang2017experimental}.

In parallel, variational quantum algorithms (VQAs)\cite{cerezo2021variational} have emerged as a promising framework for leveraging noisy intermediate-scale quantum (NISQ) devices in quantum machine learning (QML)\cite{biamonte2017quantum}. VQAs have demonstrated potential quantum advantages~\cite{khan2020machine} in domains such as quantum federated learning (QFL)\cite{chen2021federated}, quantum support vector machines (QSVMs)\cite{PhysRevLett.113.130503}, and quantum reinforcement learning (QRL)~\cite{dong2008quantum}.

Integrating VQAs with BQC provides a promising method for the client with limited quantum capabilities to delegate VQAs to a remote server via cloud networks securely. In previous work, Li et al.~\cite{li2021quantum} combined the BFK protocol~\cite{broadbent2009universal} with VQAs to implement delegated QFL. However, this approach requires significant quantum resources, with the server needing to entangle \(w \cdot d\) qubits, where \(w\) is the number of qubits required by the original NISQ algorithms and \(d\) is the depth of the brickwork states~\cite{broadbent2009universal} in the BFK protocol. Wang et al.~\cite{wang2022delegated} mitigated this by employing qubit reuse~\cite{houshmand2018minimal}, lowering the server's quantum resource consumption to \(2w+1\) qubits.

Shingu et al.\cite{shingu2022variational} further minimized the server's quantum resource consumption while upholding the principles of BQC by proposing a variational secure cloud quantum computing protocol. This protocol leverages ancilla-driven quantum computation (ADQC)\cite{anders2010ancilla} and the no-signaling principle~\cite{morimae2013blind} to implement variational quantum algorithms (VQAs) securely. Their approach requires the server to use only \(w+1\) qubits per operation. However, this protocol lacks verification and is vulnerable to potential malicious operations by the server. Additionally, it is not robust against channel loss, requiring frequent re-delegation as the size of the delegated variational circuit increases, complicating verification. This paper introduces a new protocol that extends their work by incorporating verifiability and increasing tolerance to channel loss while maintaining low quantum resource consumption for the server and requiring minimal quantum capabilities from the client.

The paper is organized as follows: Section~\ref{preliminaries} covers the preliminaries, including ADQC, VQAs, and a review of Shingu et al.'s protocol~\cite{shingu2022variational}. Section~\ref{proposed} describes the proposed protocol. Section~\ref{analysis} provides an analysis of the protocol, with a focus on verifiability, blindness, correctness, and comparisons with existing protocols. Finally, Section~\ref{conclusion} concludes the paper by discussing potential extensions and suggesting directions for future research.

\begin{figure}[t]
    \centering
    \includegraphics[width=0.45\textwidth]{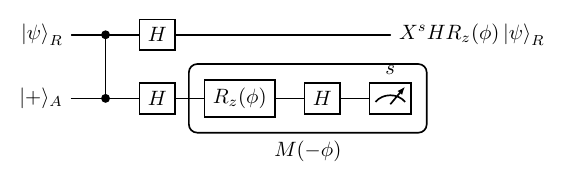}
    \caption{Circuit for the \(J(\phi)\) operator: The prepared qubits are \(\ket{\psi}_R\) and \(\ket{+}_A\), where the subscripts \(R\) and \(A\) denote the register qubits and ancillary qubit, respectively. The highlighted section represents the measurement of the ancillary qubit \(\ket{+}_A\) in the basis \(\left\{\frac{1}{\sqrt{2}}(\ket{0} \pm exp(-i\phi)\ket{1})\right\} \). After measurement, the operation \(X^{s}HR_Z(\phi)\) is obtained, where \(X\) is the Pauli \(X\) operator and \(s\) is the measurement result.}
    \label{fig:HRZ}
\end{figure}

\section{Preliminaries}
\label{preliminaries}
\subsection{Ancilla-Driven Quantum Computation}
Ancilla-driven quantum computation (ADQC)~\cite{anders2010ancilla} is a hybrid model that integrates elements of MBQC and GBQC. In ADQC, an ancillary qubit \(\ket{+}\) is coupled with a register qubit to implement a single-qubit operation \(J(\phi)\):

\begin{equation}
    J(\phi) = H R_Z(\phi),
\end{equation}

where \(\phi\) represents the designated rotation angle. Alternatively, it can be coupled with two register qubits to implement a controlled-Z gate using the fixed coupling operation \((H_R \otimes H_A)CZ_{RA}\). Here, \(CZ_{RA}\) denotes the controlled-Z gate between the register and ancillary qubits, with \(H_R\) (\(H_A\)) representing the Hadamard gate applied to the register (ancillary) qubit. The ancillary qubit is then measured in a specific basis, consistent with MBQC, to ensure determinism, and the measurement outcome determines the evolution of the register qubit(s), as illustrated in Fig.~\ref{fig:HRZ} for \(J(\phi)\) operator.

Ancillary qubits can be realized as optical photons in optical systems~\cite{agarwal2012quantum} and transmitted to distant locations after being coupled to the register qubits, enabling ADQC to be performed remotely by measuring the ancillary qubits in the basis:
\begin{equation}
    M(-\phi)=\left\{\frac{1}{\sqrt{2}}(\ket{0} \pm exp(-i\phi)\ket{1})\right\}.
\end{equation}

\subsection{Variational quantum algorithms }

Variational quantum algorithms (VQAs) utilize parameterized quantum gates, such as \(R_X\), \(R_Y\), and \(R_Z\)~\cite{cerezo2021variational}, which rotate qubits around the x, y, and z axes of the Bloch sphere, respectively. The rotation angles in the variational circuit \(\ket{\psi(\vec{\theta})}\) are optimized, where \(\ket{\psi(\vec{\theta})}\) denotes a parameterized quantum state.

When incorporating classical data into VQAs, encoding techniques such as amplitude encoding~\cite{nakaji2022approximate} are used to transform the data into quantum states. The general structure of the circuit is illustrated in Fig.~\ref{fig:VQC}. VQA circuits can be conceptualized as quantum neural networks~\cite{jeswal2019recent}, where qubits serve as nodes, and quantum gates, represented by matrices, correspond to the weights of a neural network, directly influencing the qubit states.
\begin{figure}[t]
    \centering
    \includegraphics[width=0.35\textwidth]{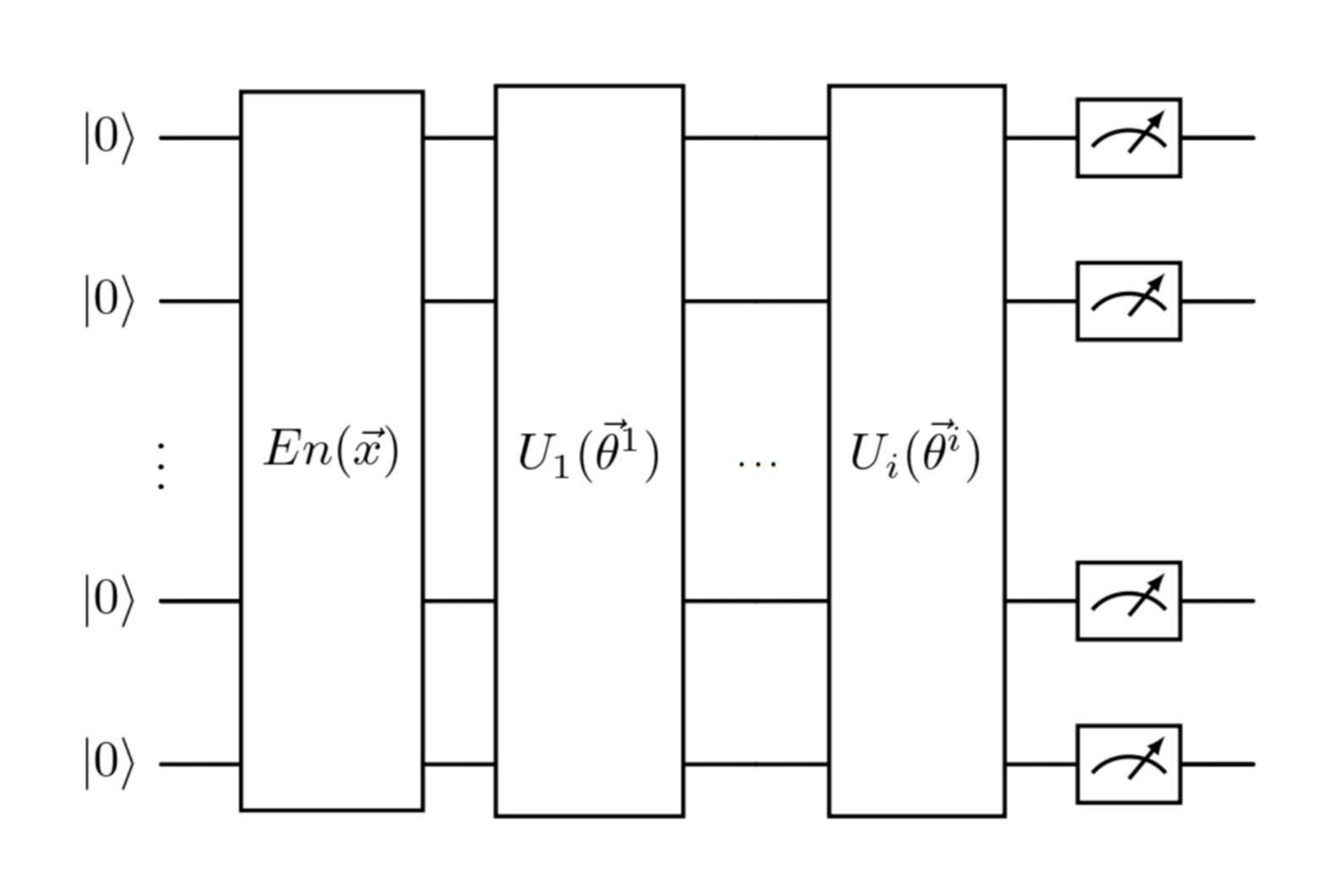}
    \caption{General variational quantum circuit: All qubits are initialized in the state \(\ket{0}\). The operator \(En\) encodes classical data \(\vec{x}\) into the quantum state \(En(\vec{x})\ket{0}^{\otimes w}\). The unitary operator \(U(\vec{\theta}) = \prod_{i=1}^{n} U_i(\vec{\theta}^i)\) represents the variational layers, forming a specifically designed ansatz, where each \(U_i(\vec{\theta^i})\) corresponds to the \(i\)-th layer of the \(n\) variational layers.}
    \label{fig:VQC}
\end{figure}
The variational parameters \(\vec{\theta} = \left\{ \theta_1, \theta_2, \ldots, \theta_L \right\}\), where \(L\) denotes the total number of parameters, are iteratively updated and optimized by a classical optimizer, such as Adam~\cite{KingBa15}, using measurement outcomes from the quantum circuit along with the cost function:
\begin{equation}
\begin{aligned}
    C(\vec{\theta}) &= f(E(\vec{\theta})) \\
              &= f(\bra{\psi(\vec{\theta})}O\ket{\psi(\vec{\theta})}),
\end{aligned}
\end{equation}
where \(E(\vec{\theta})\) denotes the expectation value of the output state \(\ket{\psi(\vec{\theta})}\) for a given set of parameters \(\vec{\theta}\), and \(O\) is the measurement observable. The function \(f\) represents the objective function, such as mean squared error (MSE) or cross-entropy.

The parameters \(\vec{\theta} = \left\{\theta_j\right\}_{j=1}^L\) are iteratively updated via gradient descent, such that \(\vec{\theta} \gets \vec{\theta} - \eta \nabla C(\vec{\theta})\), where \(\eta\) denotes the learning rate. The gradients \(\nabla C(\vec{\theta}) = \left\{\frac{\partial C(\vec{\theta}) }{\partial \theta_j} \right\}_{j=1}^L\) are computed using the parameter-shift rule~\cite{schuld2019evaluating}. Specifically, the gradient \(\frac{\partial C(\vec{\theta}) }{\partial \theta_j}\) is given by:

\begin{equation}
\begin{aligned}
    \frac{\partial C(\vec{\theta}) }{\partial \theta_j}
    &= \frac{\partial f(E(\vec{\theta}))}{\partial E(\vec{\theta})} \cdot \frac{\partial E(\vec{\theta})}{\partial \theta_j}\\
    &= \frac{1}{2} \frac{\partial f(E(\vec{\theta}))}{\partial E(\vec{\theta})} \cdot\left[E\left(\theta_{j+}\right) - E\left(\theta_{j-} \right) \right],
\end{aligned}
\label{gradient}
\end{equation}
where \(\theta_{j\pm} = \left\{ \theta_1, \ldots, \theta_j \pm \frac{\pi}{2}, \ldots, \theta_L \right\}\).

\subsection{Review of Shingu et al.'s Protocol}
The variational secure cloud quantum computing protocol proposed by Shingu et al.~employs the circuit depicted in Fig.~\ref{fig:HRZ} to perform single-qubit gates and necessitates additional gates to execute two-qubit controlled gates, as shown in Fig.~\ref{fig:TCG}.

\begin{figure*}[t]
    \centering
    \includegraphics[width=0.65\textwidth]{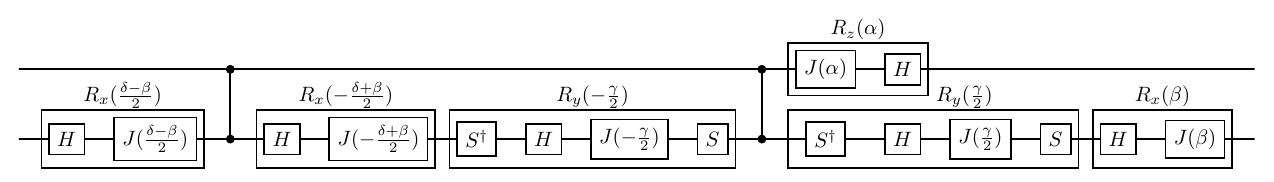}
    \caption{Circuit for the two-qubit controlled gates: The server interacts with the client to obtain 6 \(J(\phi)\) operators with parameters \(\alpha\), \(\beta\), \(\gamma\), and \(\delta\).}
    \label{fig:TCG}
\end{figure*}

Specifically, a single-qubit gate \(U\) is implemented using three consecutive \(J(\phi)\) operators with \(\phi_1\), \(\phi_2\), and \(\phi_3\), following Euler's rotation theorem~\cite{gothen2023euler}. Here, \(U = J(\phi_1)J(\phi_2)J(\phi_3)\) and \(HU = R_Z(\phi_1)R_X(\phi_2)R_Z(\phi_3)\), where \(\phi_1\), \(\phi_2\), and \(\phi_3\) are Euler angles. In contrast, a two-qubit controlled gate \(CU\) is realized by decomposing it into multiple \(J(\phi)\) operators to be delegated, along with additional \(CZ\) gates, Hadamard (\(H\)) gates, the \(S\) gate, and its conjugate transpose, the \(S^\dagger\) gate, which are directly performed by the server, as shown in Fig.~\ref{fig:TCG}. The client receives and measures ancillary qubits sent by the server to evolve the register qubits according to ADQC. Subsequently, the server measures the output register qubits and sends the results to the client.

The protocol proceeds as follows:

{\bf A1(Preparation phase):}
\begin{enumerate}
    \item[\bf A1-1:] The server publicly announces the following: the number of original and variational quantum circuits for gradient calculations, \(G\); the set of unitary operators \(\{U^{(c)}_{AN}\}_{c=1}^{G}\), where \(c\) denotes the \(c\)-th circuit; the set of measurement observables \(\{\hat{A}_{1}^{(c)}, \hat{A}_{2}^{(c)}, \ldots, \hat{A}_{K^{(c)}}^{(c)}\}_{c=1}^{G}\), where \(K^{(c)}\) is the number of observables measured in the \(c\)-th circuit; the number of circuit repetitions \(\{\mathcal{R}^{(c)}\}_{c=1}^{G}\); the initial states \(\{\ket{\psi_{out}^{(c)}(\vec{\theta}[0])}\}_{c=1}^{G}\), where \(\vec{\theta}[0]\) indicates the initial parameters at the current iteration step; the number of variational parameters, \(L\); and the total number of iteration steps for VQAs, \(\mathcal{I}\).
    \item[\bf A1-2:] The server prepares \(w\) register qubits \(\ket{0}_R\) and one ancillary qubit \(\ket{+}_A\).
\end{enumerate}

{\bf A2(Computation phase):}
\begin{enumerate}
    \item[\bf A2-1:] Adopt the quantum circuits \(\{U^{(c)}_{\rm{AN}}\}_{c=1}^{G}\) using the circuits depicted in Fig.~\ref{fig:HRZ} and Fig.~\ref{fig:TCG} to generate the trial wave functions \(\left\{\left|\psi^{(c)}(\vec{\theta}[1])\right\rangle\right\}_{c=1}^{G}\).
    \item[\bf A2-2:] The server measures the output register qubits' states with the measurement observables \(\{\hat{A}^{(c)}_1, \hat{A}^{(c)}_2, \ldots, \hat{A}^{(c)}_{K^{(c)}}\}_{c=1}^{G}\) and sends the results to the client via classical communication.
    \item[\bf A2-3:] The client compensates for the Pauli byproduct effect.
    \item[\bf A2-4:] The server reprepares ancillary qubit \(\ket{+}_A\).
\end{enumerate}

{\bf A3(Parameters updating phase):}
\begin{enumerate}
    \item[\bf A3-1:] The server and the client repeat  \textbf{A2} \(\{\mathcal{R}^{(c)}\}_{c=1}^{G}\) times to derive the expectation values of \(\{\hat{A}^{(c)}_1, \hat{A}^{(c)}_2, \ldots, \hat{A}^{(c)}_{K^{(c)}}\}_{c=1}^{G}\). 
    \item[\bf A3-2:] The client updates the parameters using its optimizer according to Eq.~\ref{gradient}, resulting in \(\vec{\theta}[2] = (\theta_1[2], \ldots, \theta_L[2])^T\) for the next iteration.
    \item[\bf A3-3:] The client computes the measurement angles for the circuits \(\{U^{(c)}_{\rm{AN}}\}_{c=1}^{G}\) using the parameters \(\vec{\theta}[2]\). 
    \item[\bf A3-4:] The client and server reiterate the above steps \((\mathcal{I}-2)\) times with \(\{U^{(c)}_{\rm{AN}}\}_{c=1}^{G}\) and \(\vec{\theta}[j]\). Based on the results from the \(j\)-th step, the client's optimizer updates the parameters to \(\vec{\theta}[j+1]\) for \(j = 2, 3, \ldots, \mathcal{I}-1\).

\end{enumerate}

This protocol enables the delegation of VQAs through interaction between the server and the client, requiring only \(w\) register qubits and a single ancillary qubit. However, if any of the coupled ancillary qubits are lost during transmission, the circuit needs to be re-delegated, which results in a low tolerance to channel loss. Even the most efficient single-photon detectors in optics, which have shown 99\% efficiency in recent studies~\cite{kuzanyan2018three, wenner2014catching}, are not immune to this issue. For sufficiently large circuits, the probability of re-delegation becomes significant. For example, consider a circuit where each block consists of an average of four single-qubit gates and one two-qubit control gate. A circuit with six blocks would require \(4 \times 3 \times 6 + 1 \times 6 \times 6 = 108\) \(J(\phi)\) operators. The probability of needing to re-delegate the circuit is approximately \((1 - 0.99^{108}) \approx 66.22\%\) per delegated \(J(\phi)\) operator, increasing exponentially with the number of \(J(\phi)\) operators.

Due to channel loss, adding dummy gates for verification becomes impractical, as the circuit would require enough extra trap wires for dummy gates without affecting the wires used for computation. This increases the circuit size and leads to frequent circuit re-delegation. One potential solution to mitigate channel loss is to share Bell pairs between the server and the client, with repeated entanglement generation until successful~\cite{morimae2013blind}. However, during each delegation of the \(J(\phi)\) operator, as illustrated in Fig.~\ref{fig:HRZ}, the ancillary qubit is sent to the client for measurement after being coupled with the register qubit. Although the server can send one half of the Bell pair to the client for measurement, additional measurements on the remaining half are required on the server side after being coupled with the register qubit. However, the server cannot perform these measurements in the context of Fig.~\ref{fig:HRZ}, where measurement on the ancillary qubit is performed by the client, rendering the use of Bell pairs ineffective in this context.

\section{The Proposed Protocol}
\label{proposed}
We adopt the strategy outlined in~\cite{sueki2013ancilla}, where the server performs measurements after coupling the ancillary qubits with the register qubits, effectively addressing the limitations of using Bell pairs to tolerate channel loss in Shingu et al.’s protocol. The client is required to perform measurements in the bases \(\left\{ M(-\frac{k\pi}{4}) \mid  k \in \{0, 1, \ldots, 7\} \right\}\). For verification, we use trap qubits to create dummy gates in the trap wires as the circuit is transformed into the universal gate patterns shown in Fig.~\ref{fig:brickwork}(a). Additionally, we delegate encrypted measurements of the output register qubits to the server, making it difficult to identify the trap wires in the circuits.
\begin{figure*}[t]
    \centering
    \includegraphics[width=0.75\textwidth]{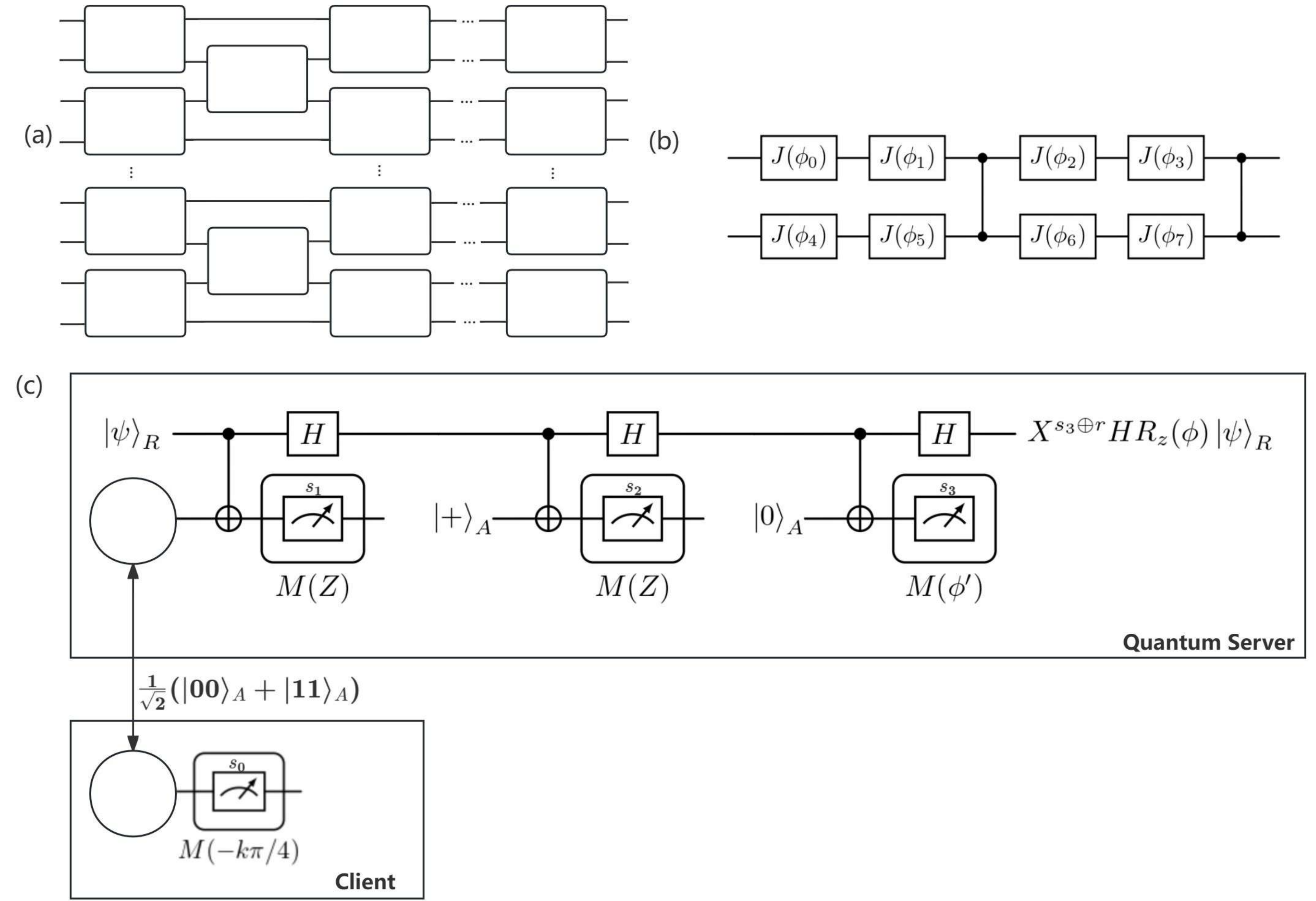}
    \caption{(a) The universal gate patterns: Composed of multiple blocks, each representing a gate pattern. (b) The gate pattern circuit: Consisting of 8 \(J(\phi)\) operators and 2 \(CZ\) gates. (c) Realization of the \(J(\phi)\) operator: The server sends one half of a Bell pair to the client, who measures in the basis \(M\left(-\frac{k\pi}{4}\right)\) with result \(s_0\). The server then performs operations on the ancillary and register qubits, including a fixed coupling operation \((H_R \otimes I_A)CX_{RA}\), where \(CX\) denotes the controlled-X gate and \(I\) is the identity gate, followed by measurements in the \(Z\) basis or in the \(M(\phi')\) basis on the ancillary qubits with results \(s_1\) to \(s_3\).}

    \label{fig:brickwork}
\end{figure*}

The client delegates the \(J(\phi)\) operators within the universal gate patterns shown in Fig.~\ref{fig:brickwork}(a), while the server provides the \(CZ\) gates. This process includes preparation, computation, verification, and parameter updating phases, as briefly illustrated in Fig.~\ref{fig:briefprocess}. The specific steps are as follows:

\begin{figure}[t]
    \centering
    \includegraphics[width=0.45\textwidth]{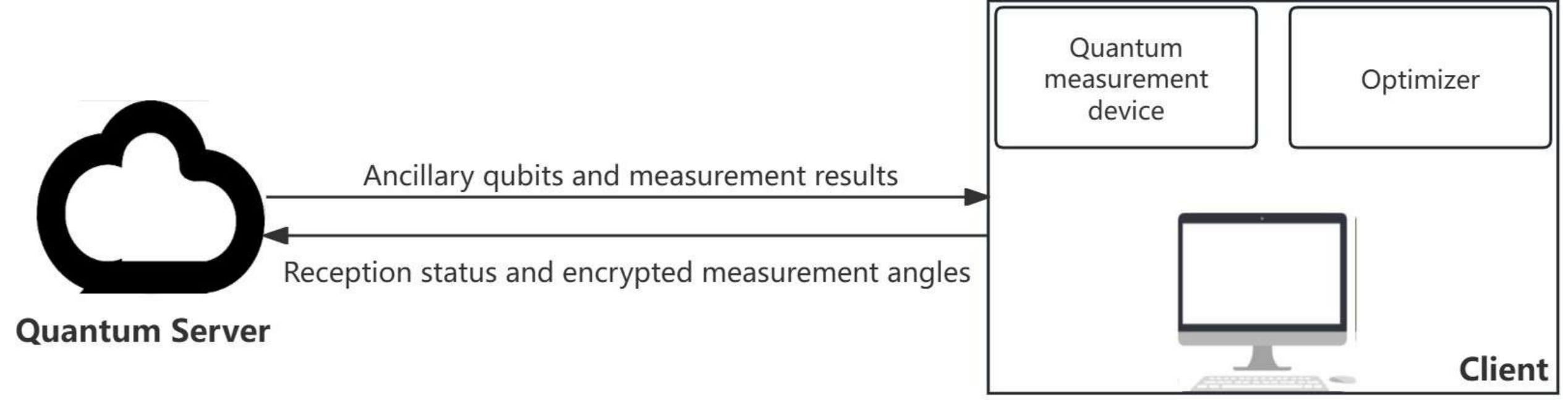}
    \caption{Brief process of the proposed protocol: The server sends an ancillary qubit, as one half of a Bell pair, to the client, who measures it in the basis \(M(-\frac{k\pi}{4})\). The client then sends the reception status back to the server, requesting a resend if the qubit is lost. The server performs encrypted measurements as instructed by the client and returns the results, enabling the client to verify the server's honesty and calculate encrypted measurement angles and gradients for the optimizer.
}
    \label{fig:briefprocess}
\end{figure}

{\bf  B1(Preparation phase):}
\begin{enumerate}
    \item[\bf B1-1:] The server publicly announces the following: the number of original and variant variational quantum circuits for gradient calculation, \(G\); the set of transformed unitary operators, \(\{U^{(c)}_{AN}\}_{c=1}^{G}\); the number of circuit repetitions, \(\{\mathcal{R}^{(c)}\}_{c=1}^{G}\); the initial states, \(\{\ket{\psi_{out}^{(c)}(\vec{\theta}[0])}\}_{c=1}^{G}\), for the \(G\) circuits; the number of variational parameters, \(L\); the total number of iteration steps for VQAs, \(\mathcal{I}\); and the size of the transformed circuits, \(\{N \times M\}_{c=1}^{G}\), where \(N=3w\) and \(M\) represent the number of input qubits and the circuit depth, respectively.
    \item[\bf B1-2:]  The server prepares \(N\) register qubits \(\ket{0}_R\), one ancillary qubit \(\ket{+}_A\), one ancillary qubit \(\ket{0}_A\) and two ancillary qubits \(\frac{1}{\sqrt{2}}(\ket{00}_A+\ket{11}_A)\) as a Bell pair. Meanwhile, the client chooses \(2N/3\) register qubits as trap qubits for verification, which is optimal~\cite{morimae2014verification}.
\end{enumerate}
{\bf B2(Computation phase):}
\begin{enumerate}
    \item[\bf B2-1:] The server sends one qubit from the Bell pair to the client, who measures it in the basis \(M\left(-\frac{k\pi}{4}\right)\), yielding the result \(s_0\), where \(k\) is randomly and uniformly selected from \(\{0, 1, \ldots, 7\}\). The client generates a random bit \(\text{recv\_status}\) to indicate the reception status, setting it to 0 if the qubit arrived successfully or to 1 if it was lost. This status is communicated to the server, who generates a new Bell pair and resends half of it to the client until the qubit arrives successfully. The other half of the Bell pair will eventually be in the state:

\begin{equation}
    Z^{s_0} R_Z\left(\frac{k\pi}{4}\right) \ket{+}_A = \frac{1}{\sqrt{2}} \left[ \ket{0}_A + \exp\left(i \left( \frac{k}{4} + s_0 \right)\pi \right) \ket{1}_A \right].
\end{equation}

   \item[\bf B2-2:] The server employs the transformed circuits \(\{U^{(c)}_{\rm{AN}}\}_{c=1}^{G}\) using the \(J(\phi)\) operator, as depicted in Fig.~\ref{fig:brickwork}(c), to generate the trial wave functions \(\left\{\left|\psi^{(c)}(\vec{\theta}[1])\right\rangle\right\}_{c=1}^{G}\). In each delegation of the \(J(\phi)\) operator, the \(Z\) measurement results \(s_1\) and \(s_2\) are first sent to the client. The client then uses \(s_1\) to compute the encrypted measurement angle:
   \begin{equation}
       \phi' = -\phi + (-1)^{s_1} \left( \frac{k}{4} + s_0 \right)\pi + r\pi,
       \label{eq6}
   \end{equation}
    where \(r \in \{0, 1\}\) is a random bit chosen by the client. This encrypted angle \(\phi'\) is sent to the server, which measures in the basis \(M(\phi')\). The resulting measurement outcome \(s_3\) is then sent to the client. The actual measurement angle \(\phi\) is adaptively chosen based on prior Pauli byproducts to ensure determinism in MBQC.
  \item[\bf B2-3:] In each transformed circuit, \(2N/3\) of the wires are randomly designated as trap wires, where dummy operations are implemented. On half of these trap wires, the identity gate is applied \(M\) times on each wire. On the remaining trap wires, a single Hadamard gate is randomly inserted on each wire, while the identity gate occupies the other \(M-1\) positions.

  \item[\bf B2-4:] The server measures the output register qubits, each with Pauli byproducts \(\{ X^{x_j} \}_{j=1}^{N}\) and \(\{ Z^{z_j} \}_{j=1}^{N}\), in the \(Z\) basis. To measure in the \(X\) basis, the client applies an additional Hadamard gate at the end of the original circuit before measuring in the \(X\) basis. This transformation can be expressed as:

\begin{equation}
    \begin{split}
        \braket{X_{R_j}} 
        &= \sum_{m=0}^{1} \left( \bra{\Psi}_{R_j} Z^{x_j} X^{z_j} \right) H \ket{m}\bra{m} H \left( X^{z_j} Z^{x_j} \ket{\Psi}_{R_j} \right) \\
        &= \sum_{m=0}^{1}  \left( \bra{\Psi}_{R_j} Z^{x_j} X^{z_j} H \right) \ket{m}\bra{m} \left( H X^{z_j} Z^{x_j} \ket{\Psi}_{R_j} \right) \\
        &= \sum_{m=0}^{1} \left( \bra{\Psi}_{R_j} H X^{x_j} Z^{z_j} \right) \ket{m}\bra{m} \left( Z^{z_j} X^{x_j} H \ket{\Psi}_{R_j} \right) \\
        &= \braket{Z_{R_j}}.
    \end{split}
\end{equation}

where \(\ket{\Psi}_{R_j}\) is the \(j\)-th output register qubit of \(\ket{\Psi}_{R}\) for \(j=1,\dots,N\). \(\braket{X_{R_j}}\) is the expectation value of \(X\) in the state \(X^{z_j} Z^{x_j} \ket{\Psi}_{R_j}\), and \(\braket{Z_{R_j}}\) is the expectation value of \(Z\) in the state \(X^{x_j} Z^{z_j} H \ket{\Psi}_{R_j}\).

\item[\bf B2-5:] The client compensates for the Pauli byproducts on all output register qubits by flipping the measurement results according to \(\left \{ x_j \right \} _{j=1}^N\) and \(\left \{ z_j \right \} _{j=1}^N\).

\item[\bf B2-6:] The server resets one ancillary qubit to \(\ket{+}_A\), one to \(\ket{0}_A\), and reprepares Bell pair \(\frac{1}{\sqrt{2}}(\ket{00}_A + \ket{11}_A)\).
\end{enumerate}

{\bf B3(Verification phase):}

\begin{enumerate}

    \item[\bf B3-1:] At the end of \textbf{B2-3}, the output register qubits consist of non-trap qubits and trap qubits in a random permutation designed by the client. The output state can be written as:
    \begin{equation}
        \ket{\Psi}_R = \sigma_{q} P\left(\ket{\psi}_{out}^{\frac{N}{3}} \otimes \ket{0}_{T}^{\frac{N}{3}} \otimes \ket{+}_{T}^{\frac{N}{3}}\right),
    \end{equation}
    where \(\ket{\psi}_{out}\) represents the non-trap qubits, and \(\sigma_{q}\) is the Pauli byproduct operator. The subscript \(R\) includes both non-trap qubits (\(out\)) and trap qubits (\(T\)), with \(P\) as the permutation.
    \item[\bf B3-2:] The client instructs the server to measure all desired trap qubits, \(\ket{0}_T^{\frac{N}{3}}\) in the \(Z\) basis and \(\ket{+}_T^{\frac{N}{3}}\) in the \(X\) basis, in \textbf{B2-4}. If any undesired output \(\ket{1}_T\) in the \(Z\) basis or \(\ket{-}_T\) in the \(X\) basis is obtained after compensating for the Pauli byproduct effect, the protocol is terminated.
\end{enumerate}

{\bf B4 (Parameters updating phase):}
\begin{enumerate}

    \item[\bf B4-1:] The server and the client repeat \textbf{B2} to \textbf{B3} \(\{\mathcal{R}^{(c)}\}_{c=1}^{G}\) times to derive the expectation values of \(Z\) or \(X\) in each output register qubit.
    \item[\bf B4-2:] The client utilizes its optimizer to update the parameters according to Eq.~\ref{gradient}, resulting in \(\vec{\theta}[2] = (\theta_1[2], \cdots, \theta_L[2])^T\) for the next step.
    \item[\bf B4-3:] The client calculates the measurement angles for circuits \(\{U^{(c)}_{\rm{AN}}\}_{c=1}^{G}\) using \(\vec{\theta}[2]\).
    \item[\bf B4-4:] The client and server reiterate the above steps \((\mathcal{I}-2)\) times with \(\{U^{(c)}_{\rm{AN}}\}_{c=1}^{G}\) and \(\vec{\theta}[j]\). Based on the results from the \(j\)-th step, the client's optimizer updates the parameters to \(\vec{\theta}[j+1]\) for \(j = 2, 3, \ldots, \mathcal{I}-1\).

\end{enumerate}

\section{Analysis}
\label{analysis}
\subsection{Verifiability}
We use trap qubits to detect malicious operations on the output register qubits. The verifiability of the proposed protocol is demonstrated in Theorem I, which follows a method similar to that in \cite{morimae2014verification}, where further details can be found.

\begin{theorem}
    The probability of the client being tricked by the server is exponentially small.
\end{theorem}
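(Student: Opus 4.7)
The plan is to follow the trap-qubit verification analysis of Morimae and Fitzsimons~\cite{morimae2014verification}, adapted to the ADQC-with-Bell-pairs setting used here. The essential idea is that any server deviation can first be reduced to a Pauli attack on the output register by averaging over the client's secret parameters, after which the random placement of the $2N/3$ trap wires forces detection with constant probability per run, and the repetition in \textbf{B4-1} amplifies this into an exponentially small cheating probability.

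First I would formalise the adversary as an arbitrary CPTP map that the server can insert at any point during the computation. Because each delegated $J(\phi)$ operator in Fig.~\ref{fig:brickwork}(c) is encrypted by both the random integer $k \in \{0,1,\dots,7\}$ (hidden inside the Bell-pair rotation applied to the client's half) and the independent random bit $r \in \{0,1\}$ in Eq.~\ref{eq6}, averaging the server's Kraus operators over these secrets performs a Pauli twirl on the logical data. Hence it suffices, without loss of generality, to upper bound the probability of success of an arbitrary Pauli deviation $\sigma$ acting on the $N$ output wires.

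Next I would analyse how $\sigma$ affects the traps. By \textbf{B1-2} and \textbf{B2-3} the client chooses a uniformly random permutation $P$ of the wires and designates $N/3$ of them as $Z$-traps (pure-identity history, output $\ket{0}$, measured in $Z$) and $N/3$ as $X$-traps (a single Hadamard inserted on an otherwise-identity wire, output $\ket{+}$, measured in $X$). A $Z$-trap is flipped whenever the restriction of $\sigma$ to its wire contains an $X$ or $Y$ factor, and an $X$-trap is flipped whenever the restriction contains a $Z$ or $Y$ factor; together they therefore detect every non-identity single-qubit Pauli supported on a trap wire. Because the server is blind to $P$ (justified in the blindness analysis that follows), any $\sigma$ that corrupts the logical output must act non-trivially on at least one wire, and that wire is a matching-type trap with some constant probability $p_{0} > 0$. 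This gives a single-round passing-while-cheating probability at most $1 - p_{0}$.

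Finally I would amplify to exponential smallness through \textbf{B4-1}, which repeats the delegation and verification $\mathcal{R}^{(c)}$ times, with the client aborting on the first wrong trap outcome. A standard Chernoff-type bound then shows that the joint probability that the server passes every verification round while still biasing the estimated expectation beyond any fixed threshold is at most $(1 - p_{0})^{\Omega(\mathcal{R}^{(c)})}$, which is exponentially small. The main obstacle is the first step: verifying that the combined randomisation of the $k$'s, the $r$'s and the permutation $P$ genuinely implements a complete Pauli twirl in this ADQC/Bell-pair delegation, since the measurement pattern differs from the brickwork MBQC of~\cite{morimae2014verification}. Once the twirl is established, the combinatorial trap-detection bound and the amplification by repetition carry over essentially unchanged from the cited work.
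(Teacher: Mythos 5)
Your proposal follows essentially the same route as the paper's proof: reduce an arbitrary CPTP deviation by the server to a random Pauli attack on the output register, argue that the secrecy of the permutation $P$ forces each non-trivial Pauli factor to land on a matching trap ($X$ or $XZ$ on a $\ket{0}_T$ trap, $Z$ or $XZ$ on a $\ket{+}_T$ trap) with constant probability, and amplify over the $\mathcal{R}^{(c)}$ repetitions to obtain an exponentially small cheating probability. The only substantive difference is that the paper makes your constant $p_0$ explicit through the combinatorial bound $C\left(\frac{2N}{3},a\right)/C(N,a) \leq \left(\frac{2}{3}\right)^{|\alpha|/3}$ (and likewise for $b$ and $c$, using $|\alpha| \leq 3\max(a,b,c)$), while, like you, it delegates the CPTP-to-Pauli reduction to the cited prior work rather than re-deriving the twirl in the ADQC/Bell-pair setting.
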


\begin{proof}
    
    If the server is malicious, it may deviate from the state: 
    \begin{equation}
        \rho = \sigma_q |\Psi\rangle_R \langle\Psi|_R\sigma^{\dagger}_q
    \end{equation}

 to an arbitrary state. However, due to the completely positive trace preserving (CPTP) map~\cite{morimae2013secure}, this deviation can be detected as a random Pauli attack.

    Suppose \(\sigma_\alpha\) represents a random Pauli attack, where the weight of \(\sigma_\alpha\) (\(|\alpha|\)) is the number of non-trivial operators in \(\sigma_\alpha\), such that \(|\alpha| = a + b + c\), where \(a\), \(b\), and \(c\) are the numbers of \(X\), \(Z\), and \(XZ\) operators in \(\sigma_\alpha\), respectively. We have \(|\alpha| = a + b + c \leq 3\max(a,b,c)\). When these operators are applied to the output trap qubits \(\ket{0}_T\) and \(\ket{+}_T\), \(X\) will only change \(\ket{0}_T\) to \(\ket{1}_T\), \(Z\) will only change \(\ket{+}_T\) to \(\ket{-}_T\), and \(XZ\) will change both \(\ket{0}_T\) and \(\ket{+}_T\) to \(\ket{1}_T\) and \(\ket{-}_T\), respectively.

    We can calculate the probabilities that each operator in \(\sigma_\alpha\) does not change any trap qubits. Suppose \(\max(a,b,c) = a\). An \(X\) operator that does not change any trap qubits will only act on \(\ket{+}_T\) and non-trap qubits, the number of which is \(2N/3\). Thus, the probability is:
    \begin{equation}
    \begin{split}
        \frac{C\left(\frac{2N}{3},a\right)}{C(N,a)} 
        &= \left(\frac{2}{3}\right)^a \frac{\prod_{k=0}^{a-1}\left(N-\frac{3}{2}k\right)}{\prod_{k=0}^{a-1}(N-k)} \\
        &\leq \left(\frac{2}{3}\right)^a \leq \left(\frac{2}{3}\right)^{\frac{|\alpha|}{3}}.
    \end{split}
\end{equation}
    Similarly, for \(\max(a,b,c) = b\), the probability \(\frac{C\left(\frac{2N}{3},b\right)}{C(N,b)} \leq \left(\frac{2}{3}\right)^{\frac{|\alpha|}{3}}\), and for \(\max(a,b,c) = c\), the probability \(\frac{C\left(\frac{N}{3},c\right)}{C(N,c)} \leq \left(\frac{1}{3}\right)^{\frac{|\alpha|}{3}}\) can be calculated in the same way.

    The proposed protocol requires the client to send qubits, qubit reception statuses, and encrypted measurement angles, which seems to conflict with the no-signaling principle. However, the permutation \(P\) is not transmitted, ensuring its secrecy. The server is only aware of measuring in the \(Z\) basis, while the \(X\) basis is implemented by delegating an additional Hadamard gate at the end of the original circuit. Consequently, the server cannot distinguish between trap and non-trap qubits, thereby further protecting \(P\) without any leakage to the server.

    After the client randomly selects a permutation \(P\), the probability that \(P^\dagger \sigma_\alpha P\) does not alter any trap qubits is at most \(\left(\frac{2}{3}\right)^{\frac{|\alpha|}{3}}\). Consequently, the probability that the server deceives the client is at most:
    \begin{equation}
        \prod_{c=1}^{G}\left(\frac{2}{3}\right)^{\frac{\mathcal{R}^{(c)}|\alpha|}{3}} = \left(\frac{2}{3}\right)^{\frac{\sum_{c=1}^{G} \mathcal{R}^{(c)} |\alpha|}{3}},
    \end{equation}
    where \(G\) is the number of original and variant quantum circuits used to calculate gradients, and \(\{\mathcal{R}^{(c)}\}_{c=1}^{G}\) is the set of circuit repetitions. This probability becomes exponentially small when \(\sum_{c=1}^{G} \mathcal{R}^{(c)}\) is sufficiently large, ensuring the protocol’s verifiability.

\end{proof}
 
\subsection{Blindness and Correctness}
We utilize universal gate patterns in the proposed protocol to ensure both blindness and correctness during computation, with minimal information leakage. Specifically, only the size of the delegated circuit, corresponding to the size of the universal gate patterns, is revealed. The following theorems establish the blindness and correctness of the proposed protocol in the context of cloud-based VQAs.

\begin{theorem}
    The proposed protocol guarantees input, output, and algorithm blindness.
\end{theorem}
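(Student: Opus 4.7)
The plan is to mirror the style of blindness proofs used for BFK-type protocols (e.g., the UBQC blindness argument and the Morimae--Fujii treatment in ancilla-driven/measurement-only contexts), showing that the joint distribution of everything the server receives is independent of the client's input, output and algorithm, so that by a standard no-signaling / indistinguishability argument nothing about them is learned.

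First, I would establish \emph{algorithm/circuit} blindness. Because the client always translates the delegated computation into the fixed universal gate pattern of Fig.~\ref{fig:brickwork}(a) whose size $\{N\times M\}_{c=1}^G$ is the only structural datum announced in \textbf{B1-1}, the underlying circuit topology seen by the server is the same for every algorithm of that size; therefore algorithm blindness reduces to hiding the measurement angles $\phi$. For this I would analyze Eq.~\eqref{eq6}: conditioned on any fixed $\phi$, the quantity $\phi'=-\phi+(-1)^{s_1}(k/4+s_0)\pi+r\pi$ with $k$ uniform on $\{0,\dots,7\}$ and $r$ uniform on $\{0,1\}$ is uniformly distributed on $\{m\pi/4\mid m\in\{0,\dots,7\}\}$. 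Thus $\phi'$ (the only angle information transmitted) acts as a classical one-time pad on $\phi$, and the server's view of every delegated $J(\phi)$ instance is identical for all choices of $\phi$.

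Next, I would handle \emph{input} and \emph{output} blindness together with the permutation secrecy needed for the trap construction. For input blindness, the server's register is always initialized in the announced state $|0\rangle_R$ and the data/parameter encoding is absorbed into the rotation angles of the variational layer, which are already hidden by the argument above. For output blindness I would invoke the quantum one-time pad: after the computation the output register is in $\sigma_q|\Psi\rangle_R$ with random Pauli byproducts $\{X^{x_j}Z^{z_j}\}$ determined by the measurement outcomes $(s_0,s_1,s_2,s_3)$ and the client's secret bits; since the server is instructed to measure every output qubit in the $Z$-basis (with the $X$-basis realized by an earlier delegated Hadamard as shown in step \textbf{B2-4}), the outcome distribution is uniform and independent of the underlying logical output, so the server learns nothing until the client classically decodes with $\{x_j,z_j\}$. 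The permutation $P$ separating trap from computation wires is never transmitted and, because the measurement basis is always $Z$ and the trap initial states $|0\rangle_T$, $|+\rangle_T$ yield statistics indistinguishable from the encrypted data qubits, the server cannot infer $P$ either.

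Finally I would assemble these pieces into the formal statement: let $\rho_{\mathrm{srv}}(x,\vec{\theta},\mathcal{A})$ be the total state of everything the server holds or receives (quantum systems plus classical transcript) when the client's input is $x$, parameters are $\vec{\theta}$, and algorithm is $\mathcal{A}$; tracing out the client's private randomness $(k,r,P,\{x_j,z_j\})$ I would show by the one-time-pad calculation that $\rho_{\mathrm{srv}}(x,\vec{\theta},\mathcal{A})$ depends only on the publicly announced data of \textbf{B1-1}, and in particular is identical to $\rho_{\mathrm{srv}}(x',\vec{\theta}',\mathcal{A}')$ for any other instance of the same circuit size. The main obstacle I anticipate is the bookkeeping across the interactive rounds: one must verify that the adaptive recomputation of $\phi$ from prior Pauli byproducts (the ``adaptive choice to ensure determinism'' mentioned in \textbf{B2-2}) does not leak correlations through the sequence of $\phi'$ values, which requires an inductive argument over the circuit depth showing that at each step the freshly sampled $(k,r)$ re-randomizes the transmitted angle independently of the history, and that the Bell-pair resend requests triggered by channel loss carry no algorithm-dependent information because $\mathrm{recv\_status}$ is generated before any computation-relevant measurement is performed.
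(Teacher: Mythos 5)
Your proposal is correct and follows essentially the same route as the paper: the $(k,r,s_0)$ pad makes the transmitted angle $\phi'$ of Eq.~\eqref{eq6} carry no information about $\phi$ (algorithm blindness), the uncompensatable Pauli byproducts $X^{x_{out}}Z^{z_{out}}$ act as a quantum one-time pad leaving the output maximally mixed, and the randomness in the Bell-pair measurement hides the input. The only divergence is cosmetic --- the paper's input-blindness paragraph argues directly that the retained Bell-pair half averages to $I/2$ over $(s_0,k)$, whereas you absorb the data encoding into the already-hidden rotation angles; your extra bookkeeping on the adaptive angles, the permutation $P$, and the \texttt{recv\_status} bit is more careful than, but fully consistent with, what the paper states.
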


\begin{proof}
    \textbf{Input Blindness:} After the client measures one qubit of the Bell pair, the remaining qubit is left in a maximally mixed state:
\begin{equation}
    \frac{1}{16} \sum_{s_0=0}^{1}\sum_{k=0}^{7} \left(Z^{s_0}R_Z\left(\frac{k\pi}{4}\right) \ket{+}_A\bra{+}_AR_Z^{\dagger}\left(\frac{k\pi}{4}\right)Z^{s_0\dagger} \right) = \frac{I}{2},
\end{equation}
with the value of \(k\) hidden from the server.

    \textbf{Output Blindness:} After applying the \(J(\phi)\) operator, the register qubit is \(X^{ s_3 \oplus r}HR_Z(\phi)\ket{\psi}_R\). Upon completion of the delegated circuit, the output register qubits are \(X^{x_{out}}Z^{z_{out}}U\ket{\Psi}_R\), where \(X^{x_{out}}Z^{z_{out}}\) are Pauli byproducts. Since the server cannot determine \(r\), it cannot compensate for \(X^{s_3 \oplus r}\), and therefore cannot compensate for \(X^{x_{out}}Z^{z_{out}}\), leaving the output qubits in a maximally mixed state.

    \textbf{Algorithm Blindness:} When the client sends the encrypted measurement angle \(\phi' = -\phi + (-1)^{s_1}\left(\frac{k\pi}{4} + s_0\pi\right) + r\pi\) to the server, the true measurement angle \(\phi\) remains concealed, as \(k\) and \(r\) are randomly chosen, and \(s_0\) is never revealed.

    Therefore, the server can only deduce the general structure of the universal gate patterns.
\end{proof}

\begin{theorem}
    The proposed protocol ensures correctness throughout the computation.
\end{theorem}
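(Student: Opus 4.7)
The plan is to establish correctness by verifying, at the gate level, that every operation in the protocol faithfully realizes the intended unitary up to a Pauli byproduct that is trackable by the client, and then to observe that the composition over the whole circuit—together with the Pauli compensation in \textbf{B2-5} and the Hadamard trick for $X$-basis readout—yields the same statistics as an ideal execution of $\{U^{(c)}_{AN}\}_{c=1}^G$. Because the trap wires are physically separate from the computational wires and carry only dummy $I$ and $H$ gates (per \textbf{B2-3}), they factorize out of the output state on the non-trap subsystem, so correctness on the computational register is unaffected by verification. Throughout, I may assume honest server behavior, since correctness is only required in that case (soundness against cheating is already handled by Theorem~I).

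The central calculation is to show that the single delegated $J(\phi)$ gadget of Fig.~\ref{fig:brickwork}(c), driven by the encrypted angle $\phi'$ in Eq.~\eqref{eq6}, implements $J(\phi)$ on the register up to a computable Pauli byproduct. First I would use Theorem~II's input-blindness calculation (which we can reuse as a simple identity) to conclude that after the client's $M(-k\pi/4)$ measurement with outcome $s_0$, the server's half of the Bell pair collapses to $Z^{s_0}R_Z(k\pi/4)\ket{+}_A$. Next I would push this state through the fixed coupling $(H_R\otimes I_A)CX_{RA}$ and the intermediate $Z$-basis measurements with outcomes $s_1,s_2$, tracking byproducts via the standard commutation rules $CX(Z\otimes I)=(Z\otimes I)CX$, $CX(X\otimes I)=(X\otimes X)CX$, and $HR_Z(\phi)H=R_X(\phi)$. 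The resulting state before the final measurement is, up to Pauli corrections depending on $s_1,s_2$, essentially $R_Z((-1)^{s_1}(k/4+s_0)\pi)J(\phi)\ket{\psi}_R$ tensored with an ancilla to be measured. Choosing $\phi'$ exactly as in Eq.~\eqref{eq6} then cancels the spurious $R_Z((-1)^{s_1}(k/4+s_0)\pi)$ phase and absorbs the client's private bit $r$ into an $X^r$ byproduct, so that the post-measurement register state is $X^{s_3\oplus r}HR_Z(\phi)\ket{\psi}_R$, which is exactly $J(\phi)\ket{\psi}_R$ up to a byproduct known to the client.

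With the $J(\phi)$ gadget verified, the rest proceeds by induction on the circuit. The server's in-house $CZ$ gates have deterministic action and standard Pauli propagation $CZ(X\otimes I)CZ=X\otimes Z$, $CZ(I\otimes X)CZ=Z\otimes X$, so the cumulative byproducts $\{x_j\},\{z_j\}$ can be maintained by the client from round to round; since the measurement angles $\phi$ are chosen adaptively to compensate prior byproducts (as noted in \textbf{B2-2}), determinism of the MBQC-style computation is preserved exactly as in the original ADQC model. For the readout, the identity derived in the display preceding \textbf{B2-5} shows that applying an additional Hadamard before a $Z$-basis measurement produces the $X$-basis expectation value on the byproduct-corrected state, so bit-flipping the raw outcomes by $\{x_j\},\{z_j\}$ in \textbf{B2-5} yields the correct expectation values of $Z$ or $X$. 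Iterating through $\{\mathcal R^{(c)}\}_{c=1}^G$ repetitions and $G$ circuits then feeds correct gradients into the parameter-shift update of Eq.~\eqref{gradient}, so the entire VQA loop reproduces the ideal statistics.

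The main obstacle I anticipate is the bookkeeping in the middle step: carefully commuting the $Z^{s_0}R_Z(k\pi/4)$ dressing past $(H_R\otimes I_A)CX_{RA}$ and the intermediate $Z$-measurements so that the residual rotation on the ancilla to be measured has the precise sign $(-1)^{s_1}$ and additive offset $(k/4+s_0)\pi$ predicted by Eq.~\eqref{eq6}. Once this matches, the encrypted angle $\phi'$ demonstrably cancels every $s_0,k,r$-dependent term and leaves only a trackable Pauli byproduct, from which correctness of the full protocol follows by straightforward composition.
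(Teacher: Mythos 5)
Your proposal is correct and follows essentially the same route as the paper: both reduce correctness to the single $J(\phi)$ gadget calculation showing the register ends in $X^{s_3\oplus r}HR_Z(\phi)\ket{\psi}_R$ after the encrypted angle $\phi'$ of Eq.~\eqref{eq6} cancels the $s_0,k,r$-dependent terms (the paper's Appendix~A), and then compose via standard MBQC byproduct propagation and the Hadamard readout identity. The only cosmetic difference is that the paper additionally spells out the explicit parameter choices by which the universal gate patterns realize arbitrary single-qubit gates and $CX$, which you subsume under composition over the whole circuit.
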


\begin{proof}
    In each gate pattern, as depicted in Fig.~\ref{fig:brickwork}(b), the server directly applies the \(CZ\) gate, while the \(J(\phi)\) operator is implemented using the circuit shown in Fig.~\ref{fig:brickwork}(c). This procedure results in the state \(HR_Z(\phi)\ket{\psi}_R\) with Pauli byproducts. Further details on this mathematical derivation are provided in ~\ref{appendix}.
   
    The client subsequently adjusts the measurement results provided by the server to account for these Pauli byproducts, ensuring the correct realization of \(J(\phi)\) operators within the universal gate patterns. Each gate pattern is parameterized by \(\{\phi_i\}_{i=0}^7\), enabling the implementation of arbitrary single-qubit gates. For example, the operation \(U_1 \otimes U_2\) can be achieved by setting \(\phi_5 = \phi_7 = 0\) while appropriately selecting the remaining parameters.  Here, \(U_1\) and \(U_2\) denote arbitrary single-qubit gates, which can be decomposed as \(U_1 = R_Z(\phi_0) R_x(\phi_1) R_Z(\phi_2)\) and \(U_2 = R_Z(\phi_4) R_x(\phi_5) R_Z(\phi_6)\), following Euler's rotation theorem~\cite{gothen2023euler}. Additionally, the \(CX\) gate can be realized within this pattern by setting \(\phi_2 = \phi_5 = \frac{\pi}{2}\) and \(\phi_7 = -\frac{\pi}{2}\), while setting the others to \(0\).

    Using these gates, any quantum gate in VQAs can be realized. Moreover, the delegated computation adheres to the rules of MBQC, where measurement angles are adaptively chosen to move all Pauli byproducts to the leftmost position, resulting in \(X^{x_{\text{out}}}Z^{z_{\text{out}}}U\ket{\Psi}_R\). By appropriately correcting the measurement results, the client ultimately obtains the desired state \(U\ket{\psi}_{\text{out}}\).
\end{proof}

 \begin{table*}
\label{tb:1}
\newcommand{\tabincell}[2]{\begin{tabular}{@{}#1@{}}#2\end{tabular}}
\caption{Comparison among different protocols for cloud-based VQAs}
\label{protocols}       
\resizebox{\textwidth}{12mm}{
\begin{tabular}{ccccc}
\hline\noalign{\smallskip}
    & Verifiable & Client's quantum capabilities  & Tolerance to channel loss & Server's quantum resource consumption \\
\hline\noalign{\smallskip}
Li et al.'s BFK-based protocol~\cite{li2021quantum} & Yes & Prepare qubits & Yes & $6w\cdot d$ \\
\hline\noalign{\smallskip}
Wang et al.'s BFK-based protocol~\cite{wang2022delegated} & Yes & Measure qubits & Yes & $6w+1$ \\
\hline\noalign{\smallskip}
Shingu et al.'s protocol~\cite{shingu2022variational} & No & Measure qubits & No & $w+1$ \\
\hline\noalign{\smallskip}
The proposed protocol & Yes & Measure qubits & Yes & $3w+4$ \\
\hline\noalign{\smallskip}
\end{tabular}}
\end{table*}
\subsection{Comparisons}

Table~\ref{protocols} compares the proposed protocol with related protocols for cloud-based VQAs. Our protocol extends the method introduced by Shingu et al.~\cite{shingu2022variational} by incorporating verification and increasing tolerance to channel loss. Regarding the client's quantum capabilities, our protocol requires only minimal quantum resource consumption; specifically, the client needs only to perform measurements in specific bases, which is significantly easier than preparing qubits in optical systems~\cite{morimae2013blind}. 

An additional advantage of our protocol is its improved tolerance to channel loss. In Shingu et al.'s protocol, the loss of any ancillary qubit during transmission necessitates the re-delegation of the entire circuit, which becomes particularly problematic for large circuits. In contrast, our protocol allows the server to perform encrypted measurements, thereby achieving tolerance by sharing Bell pairs between the server and client. While both Wang et al.'s~\cite{wang2022delegated} and Li et al.'s~\cite{li2021quantum} BFK-based protocols also tolerate channel loss, Wang et al.'s protocol requires \(6w \cdot d\) qubits with verification, and Li et al.'s protocol requires \(6w + 1\) qubits with verification, where \(w\) is the number of qubits in the original NISQ algorithms and \(d\) is the depth of the brickwork states in the BFK protocols. In contrast, our protocol, adapted from Shingu et al.'s approach, requires only \(3w\) register qubits and four ancillary qubits. Although our protocol employs more qubits due to the added verification step, it still uses significantly fewer qubits than the BFK-based protocols.

\section{Conclusion and Discussion}
\label{conclusion}
In this paper, we propose a protocol for cloud-based VQAs that extends the work of Shingu et al. by incorporating verifiability and increasing tolerance to channel loss. We have also demonstrated the protocol's blindness and correctness, ensuring security and accuracy in cloud-based VQAs. Additionally, we compare the client's quantum capabilities and the server's resource consumption with those in Shingu et al.'s protocol and BFK-based protocols, demonstrating that our protocol maintains low quantum resource consumption for the server and minimal quantum capabilities for the client.

The proposed protocol can be further extended through two potential schemes. First, if the client possesses multiple photon detectors, the \(J(\phi)\) operators can be performed on multiple register qubits in parallel, thereby accelerating the protocol's runtime. However, this approach increases the server's quantum resource consumption due to the preparation of additional ancillary qubits. Second, the protocol can be adapted for the client without quantum capabilities by employing the double-server blind quantum computation method~\cite{morimae2013secure}, which is compatible with the proposed protocol.

Further research is necessary in several areas. Reducing communication costs while ensuring security remains a significant challenge. Additionally, exploring alternative verification methods to reduce quantum resource consumption further would be beneficial. Lastly, instead of directly transforming parameterized gate-based quantum circuits into MBQC patterns, adopting an MBQC-native approach~\cite{schroeder2023deterministic,chan2024measurement,calderon2024measurement} for VQAs could offer improved circuit depth reduction, making it suitable for adaptation to cloud-based VQAs. Our protocol has the potential to pave the way for real-world applications of cloud-based quantum computing.

\section*{Acknowledgement}
This work is supported by the National Natural Science Foundation of China under Grant Nos. 62072119 and 62271436.

\appendix

\section{Derivation of \(J(\phi)\) within Universal Gate Patterns}
\label{appendix}

We analyze the operations within the circuit depicted in Fig.~\ref{fig:brickwork}(c), involving three ancillary qubits and one register qubit \(R = \ket{\psi}_R\). Let \(A_1\), \(A_2\), and \(A_3\) represent the ancillary qubits. After the client measures one half of the Bell pair, yielding the result \(s_0\), the ancillary qubits are prepared in the initial states:
\begin{equation}
    A_1 = Z^{s_0} R_Z\left(\frac{k\pi}{4}\right) \ket{+}_{A_1}, \quad A_2 = \ket{+}_{A_2}, \quad A_3 = \ket{0}_{A_3}.
\end{equation}

Let \(s_1\), \(s_2\), and \(s_3\) denote the measurement results of \(A_1\), \(A_2\), and \(A_3\), respectively. The resulting operations on the register and ancillary qubits are:

\begin{equation}
\begin{split}
    &\bigl[H_R \otimes M_{A_3}(\phi')\bigr] CX_{R{A_3}} \bigl[H_R \otimes M_{A_2}(Z)\bigr] CX_{R{A_2}} \bigl[H_R \otimes M_{A_1}(Z)\bigr]  \\
    &\quad CX_{R{A_1}}\left[\ket{\psi}_R \otimes Z^{s_0} R_Z\left(\frac{k\pi}{4}\right) \ket{+}_{A_1} \otimes \ket{+}_{A_2} \otimes \ket{0}_{A_3}\right] \\
    &= \bigl[H_R \otimes M_{A_3}(\phi')\bigr] CX_{R{A_3}} \bigl[H_R \otimes M_{A_2}(Z)\bigr] CX_{R{A_2}} \\
    &\quad  \Bigl\{ \frac{1}{\sqrt{2}}\left[ H_R R_Z\left(\frac{k\pi}{4} + s_0\pi\right)\ket{\psi}_R \otimes \ket{0}_{A_1} \right]\otimes \ket{+}_{A_2} \otimes \ket{0}_{A_3} +   \\
    &\quad \frac{1}{\sqrt{2}}\left[ H_R R_Z\left(-\frac{k\pi}{4} - s_0\pi\right)\ket{\psi}_R \otimes \ket{1}_{A_1} \right]  \otimes \ket{+}_{A_2} \otimes \ket{0}_{A_3}\Bigr\}.
\end{split}
\end{equation}

The measurement result \(s_1\) of \(A_1\) determines the \(Z\)-rotation angle for \(\ket{\psi}_R\). We can simplify the above expression by omitting \(A_1\):

\begin{equation}
\begin{split}
    &\left[H_R \otimes M_{A_3}(\phi')\right] CX_{R{A_3}} \left[H_R \otimes M_{A_2}(Z)\right] CX_{R{A_2}}  \\
    &\quad \left[H_R R_Z\left((-1)^{s_1} \left(\frac{k\pi}{4} + s_0\pi\right)\right)\ket{\psi}_R \otimes \ket{+}_{A_2} \otimes \ket{0}_{A_3}\right] \\
    &= \left[H_R \otimes M_{A_3}(\phi')\right] CX_{R{A_3}} \left[I_R \otimes \left(M_{A_2}(Z) H_{A_2}\right)\right] CZ_{R{A_2}} \\
    &\quad \left[R_Z\left((-1)^{s_1} \left(\frac{k\pi}{4} + s_0\pi\right)\right)\ket{\psi}_R \otimes \ket{0}_{A_2} \otimes \ket{0}_{A_3}\right].
\end{split}
\end{equation}

Since the \(CZ\) operation does not entangle \(\ket{0}_{A_2}\) with the register qubit, we can omit the operations on \(A_2\). Thus, we obtain the following expression for the remaining qubits:

\begin{equation}
\begin{split}
    &\left[H_R \otimes M_{A_3}(\phi')\right] CX_{R{A_3}} \\
    &\quad \left[R_Z\left((-1)^{s_1} \left(\frac{k\pi}{4} + s_0\pi\right)\right)\ket{\psi}_R \otimes \ket{0}_{A_3}\right] \\
    &= \frac{1}{\sqrt{2}} H_R R_Z\left((-1)^{s_1} \left(\frac{k\pi}{4} + s_0\pi\right) - \phi'\right)\ket{\psi}_R \otimes \ket{0}_{A_3} +\\
    &\quad \frac{1}{\sqrt{2}} X H_R R_Z\left((-1)^{s_1} \left(\frac{k\pi}{4} + s_0\pi\right) - \phi'\right)\ket{\psi}_R \otimes \ket{1}_{A_3}.
\end{split}
\end{equation}

The measurement result \(s_3\) of \(A_3\) determines the state of \(R\):
\begin{equation}
    R = X^{s_3} H_R R_Z\left((-1)^{s_1} \left(\frac{k\pi}{4} + s_0\pi\right) - \phi'\right)\ket{\psi}_R.
\end{equation}

Applying Eq.~\ref{eq6}, the above simplifies to:

\begin{equation}
\begin{split}
       R &= X^{s_3} H_R R_Z\left(\phi - r\pi\right)\ket{\psi}_R\\
       &= X^{s_3 \oplus r} H_R R_Z(\phi)\ket{\psi}_R. 
\end{split}
\end{equation}

Thus, \(J(\phi)\) is obtained with Pauli byproducts on the register qubit \(R\).

\end{document}